\newcommand{\bes} {\begin{subequations}}
\newcommand{\ees} {\end{subequations}}
\newcommand{\ba}{\begin{eqnarray}}
\newcommand{\ea}{\end{eqnarray}}
\newcommand{\mrp}{\mathrm{p}}
\newcommand\norm[1]{\left\lVert#1\right\rVert}
\newcommand{\ketbra}[1]{|{#1}\rangle\langle#1|}
\newtheorem{theorem}{Theorem}
\newtheorem*{theorem*}{Theorem}
\newtheorem{lemma}{Lemma}[theorem]
\def\a{\alpha}
\begin{document}
\title{Robust universal Hamiltonian quantum computing using two-body interactions}
\author{Milad Marvian}

\date{\today}
\affiliation{Research Laboratory of Electronics, Massachusetts Institute of Technology, Cambridge, MA 02139, USA}

\author{Seth Lloyd}
\affiliation{Department of Mechanical Engineering, Massachusetts Institute of Technology, Cambridge MA 02139, USA} 
\affiliation{Research Laboratory of Electronics, Massachusetts Institute of Technology, Cambridge, MA 02139, USA}

\begin{abstract}
We present a new scheme to perform noise resilient universal adiabatic quantum computation using two-body interactions. 
To achieve this, we introduce a new family of error detecting subsystem codes whose gauge generators and a set of their logical operators --- capable of encoding universal Hamiltonian computations --- can be implemented  using two-body interactions. 
Logical operators of the code are used to encode any given computational Hamiltonian,  and the gauge operators are used to construct a penalty Hamiltonian
whose ground subspace is protected against local-errors. In contrast to previous approaches, the constructed penalty Hamiltonian does not necessarily commute with the encoded computational Hamiltonians, but for our construction the undesirable effect of the penalty Hamiltonian on the computation can be compensated by a simple modification of the implemented Hamiltonians.  

We also investigate whether a similar scheme can be constructed by encoding the computational Hamiltonian using only bare-logical operators of subsystem codes, to guarantee that computational Hamiltonian commutes with the penalty Hamiltonian. 
We prove a no-go theorem showing that restricting to two-body interactions and using any general CSS-type subsystem codes, such a construction cannot encode systems beyond an Ising chain in a transverse field. We point out that such a chain is universal for Hamiltonian-based quantum computation, but it is not universal for ground-state quantum computation.

\end{abstract}

\maketitle

\section{Introduction}
 
A general approach to reducing the effect of noise on any system is to introduce redundancy, in such a way that the larger system is more protected against the noise. 
Quantum codes capable of detecting the presence of a targeted set of errors provide a systematic way of encoding an arbitrary system into a larger system which is protected against errors.

A scheme to use quantum codes to protect a Hamiltonian-based quantum computation~\cite{aharonov_adiabatic_2007,albash2018adiabatic} was introduced by Jordan, Farhi, and Shor~\cite{jordan2006error}. In this scheme, a quantum code  that can detect the action of the system-bath interaction Hamiltonian on the system is chosen. Then, using the stabilizers of the code a penalty Hamiltonian whose ground subspace is protected against the induced errors is constructed. By construction, logical operators of any stabilizer quantum code commute with the stabilizer operators and hence with the penalty Hamiltonian. Therefore the structure of the code allows using logical operators of the code to perform arbitrary computation in the ground subspace of the penalty Hamiltonian, which penalizes any excitation out of the codespace. This approach has been generalized and its performance in non-Markovian \cite{Bookatz:2014uq} and Markovian \cite{marvian2017error,lidar2019arbitrary} environments have been studied.

However, the experimental feasibility of any such scheme is heavily dependent on the required extra resources. In addition to the extra qubits, an important consideration is the type of interactions, such as locality of the interactions that need to be implemented. Even if the dominant errors on the system are one-local, a penalty Hamiltonian that is constructed using any (subspace) stabilizer codes requires at least four-local interactions~\cite{jordan2006error}. More generally, any commuting Hamiltonian with a ground subspace protected against one-local errors cannot be implemented using two-body interactions~\cite{Marvian:2014nr}. As strong controllable high-weight interactions are hard to engineer, it is important to know whether error suppression is possible using lower-weight interactions. 

Going beyond commuting Hamiltonians, recently it has been shown that using subsystem codes one can construct a non-commuting two-body Hamiltonian capable of suppressing general one-local errors~\cite{Jiang:2015kx,Marvian-Lidar:16}. This reduction in the locality of the penalty Hamiltonian makes them experimentally more feasible. For example, using Josephson phase-slip qubits implementation of such a penalty Hamiltonians, with strong static two-body interactions and no local fields,  can increase the effective coherence time of qubits considerably  \cite{kerman2019superconducting}.
Keeping the penalty Hamiltonian two-local, we can ask what is the minimum locality of an encoded computational Hamiltonian. Any logical operator has to be at least two-local to be different from one-local errors. Therefore, encoding a Hamiltonian by individually replacing each qubit with its logical counterpart (block encoding), would encode a two-local interaction into a four-local interaction. 
Two-local constructions to embed Hamiltonians while suppressing general one-local errors have been found for limited Hamiltonians, such as an Ising chain in a transverse field~\cite{Marvian-Lidar:16}.

The ground-state computation using the Ising chain in a transverse field is unlikely to be universal, but with a fast control of local fields, and therefore going out of the ground space, the chain is already powerful enough to simulate any other Hamiltonian and hence is computationally universal \cite{Benjamin:01,lloyd2018quantum}. 
Until now, it has been left open whether a non-perturbative two-local construction that can encode a universal adiabatic quantum computation while suppressing  one-local errors is possible.
Here we answer this question affirmatively. 

Here we present a subsystem code whose gauge operators, single qubit logical operators, and products of two logical operators of the same type (up to a gauge operator) are all two-local operators. Then we show how this code can be used to encode universal adiabatic quantum computations while suppressing one-local errors. Unlike previous constructions, the penalty Hamiltonians and the encoded Hamiltonians do not necessarily  commute, but we will show that for our construction the undesirable effect of the penalty Hamiltonian on the computation can be compensated by a simple modification of the implemented Hamiltonians. 
In addition, in a no-go theorem, we  show  that requiring the penalty Hamiltonian to commute with the encoded Hamiltonian limits the geometry of the Hamiltonians that can be protected. In particular, we show that, using general all-two-local CSS-type subsystem codes, it is not possible to protect Hamiltonians beyond an Ising chain in a transverse field.

\textit{Introduction to subsystem codes}-- We first briefly review subsystem stabilizer codes~\cite{poulin_stabilizer_2005}  which are the generalization of subspace stabilizer codes~\cite{Gottesman:1996fk}. In short, subsystem stabilizer codes can be thought as subspace stabilizer codes where some of the logical qubits are not used to store information, ignoring the errors that occur on these logical qubits.  

More formally, any subspace stabilizer code is defined by an Abelian group $\mathcal{S}$ of Pauli operators, where the codespace is the simultaneous +1 eigenstates of the group elements. The set of operators that are included in $\mathcal{S}$ determine the error-detecting properties of the code.
Any Pauli error that anti-commutes with at least one element of $\mathcal{S}$ takes a codestate to a state out of the codespace and hence is a detectable error.
The set of Pauli operators that are not in $\mathcal{S}$ but keep it invariant, $\mathcal{C}(\mathcal{S})/\mathcal{S}$,  perform logical operations in the codespace. 
In subsystem codes, these operators are partitioned into two commuting set of Pauli operators:  set of gauge operators $\mathcal{A}$ and set of logical operators $\mathcal{L}$. This partitioning induces a subsystem structure in the codespace~\cite{ZLL:04}.

The (non-Abelian) gauge group is defined as $\mathcal{G}=\braket{\mathcal{S},\mathcal{A}}$. Logical operators that 
preserve the codespace and act trivially on the gauge subsystem are called bare-logical operators, as opposed to dressed-logical operators that also preserve the codespace
but can act non-trivially in the gauge subsystem. 

The relaxed criterion, allowing errors to act on the gauge subsystem in the codespace, makes subsystem codes more powerful in some cases. Subsystem codes such as the Bacon-Shor code \cite{bacon2006operator, shor1995scheme} require simpler syndrome measurements, leading to surprisingly good error correction performances \cite{aliferis2007subsystem,Napp:2013,li2018direct}.
Furthermore, it has been shown that under the locality constraint, subsystem codes can encode more information comparing to subspace codes~\cite{PhysRevA.83.012320}.

To protect a given computational Hamiltonian $H_s$ against a set of errors, we choose a subsystem code that can detect the errors. Then a (non-commuting) penalty Hamiltonian is constructed using the elements of the gauge group of the code:
\ba
H_p=- \sum_{g_i \in \mathcal{G}} g_i \, .
\ea
By construction, the ground subspace of $H_p$ is protected against the targeted set of errors. To perform the desired computation, an encoded computational Hamiltonian $\bar{H}_s$ is constructed by replacing each operator in $H_s$ with the corresponding logical operator of the code. $\bar{H}_s$ performs the desired computation in the ground subspace of the penalty Hamiltonians, which is a protected subspace against errors. 

Encoding $\bar{H}_s$ using bare-logical operators of a code guarantees that it commutes with  the penalty Hamiltonian $H_p$, and hence the penalty Hamiltonian does not interfere with the computation. 
Constructing $\bar{H}_s+E_p H_p$ using subsystem codes, in Ref.~\cite{Marvian-Lidar:16} general conditions guaranteeing protection against errors  and also  performance bounds have been derived.
In addition, some examples of two-local Hamiltonians capable of suppressing one-local errors were introduced. One of the examples is Ising chain (without local field) in a transverse field. It was left open whether it is possible to extend such a two-local construction to encode universal Hamiltonian computation while suppressing local errors.  

We first show that using bare-logical operators of any CSS subsystem code whose 
gauge group can be generated using two-local interactions, including the generalized Bacon-Shor subsystem codes~\cite{PhysRevA.83.012320,yoder2019optimal}, one cannot encode systems beyond the transverse field Ising chain introduced in Ref.~\cite{Marvian-Lidar:16}.

\begin{theorem}
Consider any nontrivial (with distance at least two) stabilizer subsystem code of CSS type with a gauge group that can be generated using  $XX$ and $ZZ$ interactions. Then the weight of an X-type (Z-type)  single-qubit bare-logical operator is lower-bounded by the number of Z-type (X-type) bare-logical operators acting on its supporting logical qubits.
\end{theorem}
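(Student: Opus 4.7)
The plan is to reduce the theorem to a linear-algebraic bound on restrictions of Z-type bare-logical operators to $\mathrm{supp}(\bar{X}_i)$, using the CSS plus two-body gauge structure to make bare-logical supports rigid.

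First, I would translate the hypothesis into graph-theoretic language. Regarding the $ZZ$ gauge generators as edges of a graph $G_Z$ on the physical qubits and the $XX$ generators as edges of a graph $G_X$, a product of Pauli-$X$ operators commutes with every $ZZ$ gauge generator if and only if its support is a union of connected components of $G_Z$; the symmetric statement holds for $G_X$ and Z-type operators. Consequently, the support $S_i$ of the X-type single-qubit bare-logical $\bar{X}_i$ is a union of ``$Z$-components'', and the support $T_j$ of each Z-type single-qubit bare-logical $\bar{Z}_j$ is a union of ``$X$-components''. The (anti-)commutation relations translate into the parity constraints $|S_i \cap T_j| \equiv \delta_{ij} \pmod{2}$.

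Next, view each restriction $r_j := \bar{Z}_j\big|_{S_i}$ as a vector in $\mathbb{F}_2^{S_i}$. Two Z-type bare-logical operators with the same restriction differ by a Z-type logical operator whose support avoids $S_i$; such an operator commutes with $\bar{X}_i$ by disjointness of supports. Choosing each representative $\bar{Z}_j$ so that it meets $S_i$ whenever its coset forces it to, I would argue that the nonzero restrictions $\{r_j : r_j \neq 0\}$ become linearly independent in $\mathbb{F}_2^{S_i}$. The desired bound $\mathrm{wt}(\bar{X}_i) = |S_i| \geq \#\{j : T_j \cap S_i \neq \emptyset\}$ then follows immediately from a dimension count.

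The main obstacle is the linear-independence step. Concretely, one must show that any nontrivial product $Q = \prod_{j \in L} \bar{Z}_j$ with $Q\big|_{S_i} = I$ is forbidden by the choices above. Disjointness of $Q$ from $S_i$ forces it to commute with $\bar{X}_i$, so $i \notin L$; the two-body component structure then allows $Q$ to be decomposed $X$-component by $X$-component, and the distance-at-least-two hypothesis together with the chosen representatives rules out the remaining possibilities, forcing $L = \emptyset$. Once this structural step is secured, the theorem follows immediately from the dimension inequality above.
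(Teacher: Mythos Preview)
Your proposal has a genuine gap, and it stems from a misreading of what the theorem is counting. As the paper's examples and its own proof make explicit, the quantity bounding $\mathrm{wt}(\bar{X}_i)$ from below is the number of \emph{weight-two} Z-type bare-logical operators that act nontrivially on logical qubit $i$ (equivalently, that anticommute with $\bar{X}_i$). You instead count single-qubit logicals $\bar{Z}_j$ whose \emph{physical} support meets $S_i$, and nowhere do you invoke a weight-two hypothesis on the Z-type operators. These are different quantities, and the weight-two restriction is precisely what drives the paper's argument.

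The central step in your plan---linear independence of the nonzero restrictions $r_j=\bar{Z}_j|_{S_i}$ in $\mathbb{F}_2^{S_i}$---is not established, and I do not see how to close it along the lines you sketch. If $\prod_{j\in L}\bar{Z}_j$ restricts trivially to $S_i$, you correctly note $i\notin L$; but nothing in the two-body CSS structure prevents a genuine Z-type logical such as $\bar{Z}_2\bar{Z}_3$ from being supported entirely outside $S_1$ while $\bar{Z}_2$ and $\bar{Z}_3$ individually meet $S_1$ with equal restrictions. The phrase ``decompose $X$-component by $X$-component and use $d\geq 2$'' does not rule this out, and no choice of stabilizer-coset representatives fixes it in general.

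The paper's proof is far more direct and uses the weight-two hypothesis in an essential way. From $d\geq 2$ it follows that every physical qubit carries at least one $XX$ and one $ZZ$ gauge generator. Now if two distinct weight-two Z-type bare-logicals $Z_qZ_a$ and $Z_qZ_b$ shared a physical qubit $q$, pick any $XX$ generator $X_qX_c$ at $q$: bareness forces $c=a$ and $c=b$, contradicting $a\neq b$. Hence distinct weight-two Z-type bare-logicals have pairwise \emph{disjoint} physical supports. Since $\bar{X}_i$ must overlap the support of every Z-type bare-logical it anticommutes with, $m$ such disjoint weight-two operators immediately force $\mathrm{wt}(\bar{X}_i)\geq m$. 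No dimension count or choice of representatives is needed; the disjointness does all the work.
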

\begin{proof}
See the Supplementary Material (SM).
\end{proof}
For example, consider a subsystem code which has a set of weight-two gauge group generators, and has weight-two  bare-logical operators $\{\bar{Z}_1, \bar{Z}_1\bar{Z}_2,\bar{Z}_1\bar{Z}_3\}$. Then $\bar{X}_1$ has to be at least three-local.
Therefore even to generate bare-logical operations equivalent to a simple set of interactions such as $\{\bar{Z}_1, \bar{Z}_1\bar{Z}_2,\bar{Z}_1\bar{Z}_3, \bar{X}_1\}$ or $\{\bar{Z}_1\bar{Z}_2, \bar{Z}_1\bar{Z}_3,\bar{Z}_1\bar{Z}_4, \bar{X}_1\}$,  physical interactions with a weight larger than two are required. 
From this we conclude that using this scheme with bare-logical encoding, one cannot go beyond the encoding of an Ising \textit{chain} in a transverse field as constructed in Ref.~\cite{Marvian-Lidar:16}.


Ising chain in a transverse field, with fast control on the transverse field, 
is already powerful enough to simulate any other Hamiltonian and hence is computationally universal \cite{Benjamin:01,lloyd2018quantum}. But ground-state computation using the Ising chain in a transverse field is unlikely to be universal \cite{Cubitt:2016vl}.\footnote{Since, for example, any Ising chain with transverse field can be efficiently converted to a stoquastic Hamiltonian.} 

To circumvent this limitation, we consider encoding using dressed logical operators. 
 
\textit{A fully two-local code}-- 
Here we present a $[[6k,2k,2]]$ quantum subsystem code (it encodes $2k$ logical qubits into $6k$ physical qubits and can detect arbitrary 1-local errors) with the property that a complete set of its gauge group generators, single qubit logical operators $\{\bar{X}_i,\bar{Z}_i\}$, and also the product of any two logical operators of the same type $\{\bar{X}_i \bar{X}_j,\bar{Z}_i \bar{Z}_j,\bar{Y}_i \bar{Y}_j \}$ (up to a gauge operator) can all be implemented using two-body interactions. In Figure~\ref{Fig_code}, the defining properties of this code are presented. In particular,  the gauge group of the code is  the group generated by
\begin{eqnarray} \label{gauge}
\mathcal{G}=\braket{X_{B_i} X_{R_i}, X_{L_i} X_{L_{i+1}}, Z_{B_i} Z_{L_i}, Z_{R_i} Z_{R_{i+1}}},\nonumber
\end{eqnarray}
where $1\leq i \leq 2k$.
Single-qubit logical operators can be defined as $\bar{X}_i=X_{B_i} X_{L_i}$ and $\bar{Z}_i=Z_{B_i} Z_{R_i}$. Product of  two logical operators  $\bar{X}_i$ and  $\bar{X}_j$ becomes $X_{B_i} X_{L_i}X_{B_j} X_{L_j}$. Since $X_{L_i}X_{L_j}$ is an element of the gauge group,  the two-body interaction $X_{B_i} X_{B_j}$ implements  $\bar{X}_i  \bar{X}_j$ up-to a gauge operator. Similarly $\bar{Z}_i\bar{Z}_j$ and $\bar{Y}_i \bar{Y}_j$ can be implemented using two-body interactions, up-to a gauge operator. 
This code can be understood as an example of Bravyi's generalization of Bacon-Shor code ~\cite{PhysRevA.83.012320} (see the SM for details), also it can be interpreted as a subsystem version of the code introduced in Ref.~\cite{ganti2014family}. 
\begin{figure*}[!t]
\begin{center}
\includegraphics[height=0.30\textheight]{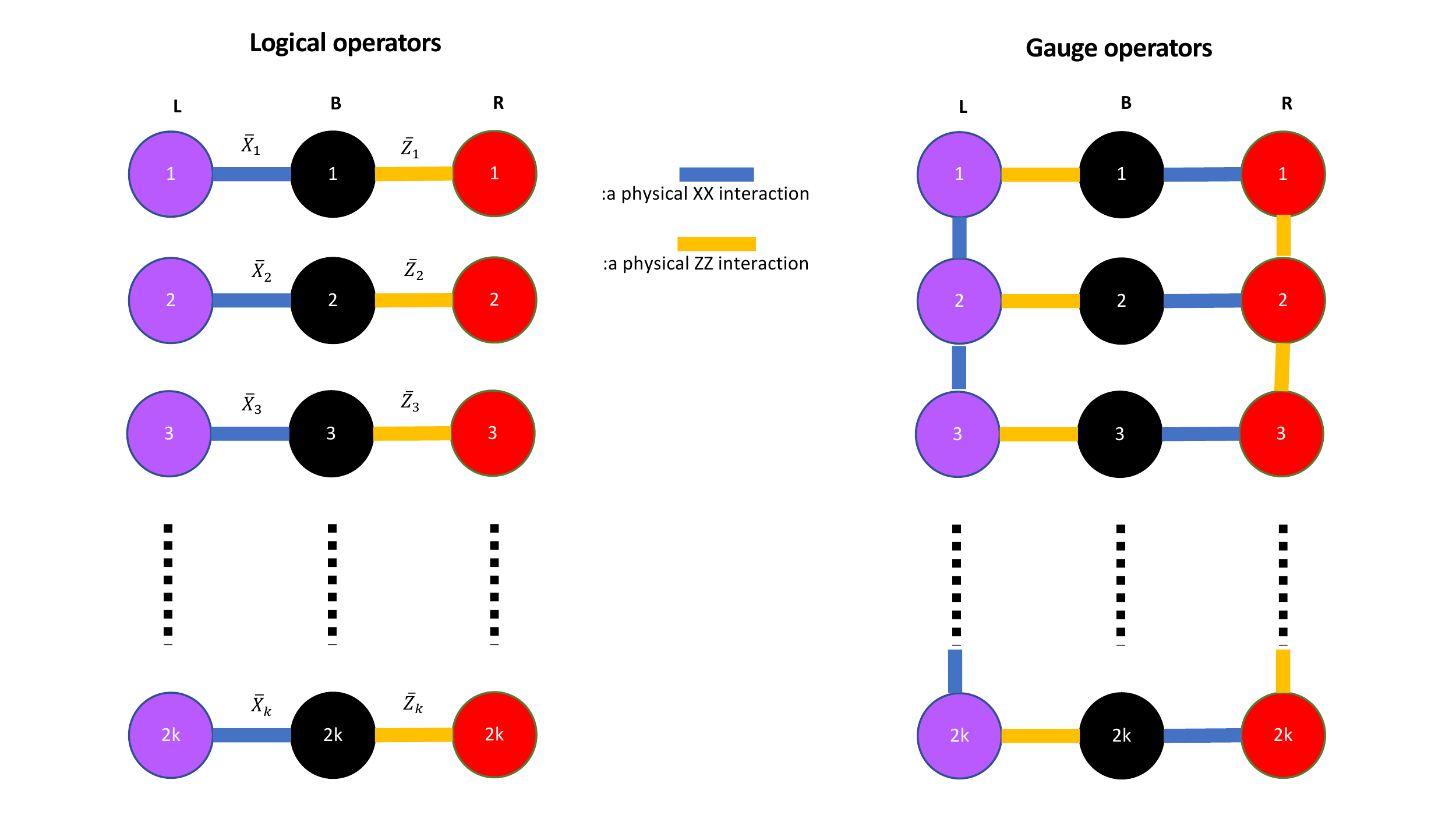}
\caption{\textbf{Description of the $[[6k,2k,2]]$ subsystem code:} Blue lines represent $XX$ interactions and orange lines represent $ZZ$ interactions. A set of generators of the gauge group of the code are presented in right panel. A set of logical operators are presented in the left  panel.}
\label{Fig_code}
\end{center}
\end{figure*}

%

\textit{Encoding}-- 
Suppose we want to encode the following universal Hamiltonian:
\begin{equation} \label{universalH}
H_s(t)= \sum_i a_i X_i +  \sum_i b_i Z_i + \sum_{ij} c_{ij} X_i X_j 
+  \sum_{ij} d_{ij} Z_i Z_j.
\end{equation}

If we replace each operator in this Hamiltonian with the corresponding  bare-logical operators of the introduced code, the resulting Hamiltonian, 
\begin{equation} \label{barHs}
\bar{H}_s(t)= \sum_i a_i \bar{X}_i +  \sum_i b_i \bar{Z}_i + \sum_{ij} c_{ij} \bar{X}_i \bar{X}_j 
+  \sum_{ij} d_{ij} \bar{Z}_i \bar{Z}_j,
\end{equation}
would be at least four-local. (Implementing  $\bar{X}_i\bar{X}_j$ and $\bar{Z}_i \bar{Z}_j$  require at least four-body interactions.) 
Instead, we implement the following two-body physical Hamiltonian:
\ba
\hat{H}_s(t)&=& \sum_i a_i X_{L_i}X_{B_i}  +  \sum_i b_i Z_{B_i} Z_{R_i}  \nonumber \\
&+& \sum_{ij} c'_{ij} X_{B_i} X_{B_j}   +  \sum_{ij} d'_{ij} Z_{B_i} Z_{B_j},
\ea
where the notation of Fig.~\ref{Fig_code} is used. Later, we discuss how to choose $c'_{ij}$ and $d'_{ij}$. Re-writing this Hamiltonian using the logical and gauge operators, we have
\ba
\hat{H}_s(t)&=& \sum_i a_i \bar{X}_i +  \sum_i b_i \bar{Z}_i  \nonumber \\
&+& \sum_{ij} c'_{ij} \bar{X}_i \bar{X}_j g^x_{ij}  +  \sum_{ij} d'_{ij} \bar{Z}_i \bar{Z}_j g^z_{ij}.
\ea
In other words, this physical two-local Hamiltonian implements our desired computational Hamiltonian (up to some gauge operators) using dressed-logical operators of the code. Using dressed-logical operators, as opposed to bare-logical operators, we can avoid the no-go theorem.

In general, the dressed-logical operators $\bar{Z}_i \bar{Z}_j g^z_{ij}$ and $\bar{X}_i \bar{X}_j g^x_{ij}$ can couple the gauge subsystem and the information subsystem. Since $[\hat{H}_s(t) , H_p] \neq 0$, the penalty Hamiltonian can interfere with our desired computation. But surprisingly, in what follows we show that in the large penalty limit this coupling effect can be easily compensated with a simple rescaling of the coefficients $c_{ij}$ and $d_{ij}$. 

Although in Figure \ref{Fig_code} the qubits appear to be on a chain,  the encoding described here works for  Hamiltonians on a 2D lattice (see the SM).
It is important to note that this scheme encodes a geometrically local Hamiltonian into geometrically local Hamiltonian. 

Also note that all the constructions and derivations presented here can be extended to the case where $H_s$ in Eq.~\ref{universalH},  the Hamiltonian we want to protect,  includes $Y_i Y_j$ interactions as well. This is a result of the fact that a logical $\bar{Y}_i \bar{Y}_j$ interaction can be implemented using $Y_{B_i} Y_{B_j}$ physical interactions. 

\textit{ Large penalty limit}--
In this section, we prove that in the large energy penalty limit, not only the system becomes decoupled from the bath,  but it  also faithfully performs the computation we encode. More precisely,  in the Supplementary Material --without using perturbation theory-- we prove that by increasing the energy penalty $E_p$ the evolution generated by 
$\hat{H}_s(t) + E_p H_p +H_B + H_{SB}$ becomes arbitrary close to the decoupled evolution generated by $\bar{H}_s(t) +H_B+E_p H_p $.  
But here it is instructive to study the effective Hamiltonian in the limit of large $E_p$, using first order perturbation theory. Denoting the ground subspace of the penalty Hamiltonians $H_p$ by $\Pi_0$, and assuming that the initial state is prepared in this subspace, $\hat{H}_s(t) + E_p H_p +H_B + H_{SB}$  in the large $E_p$ limit effectively becomes $\Pi_0 (\hat{H}_s(t) +H_B + H_{SB})\Pi_0$. 

Since $H_p$ acts non-trivially only on the system, clearly the bath Hamiltonian $H_B$ commutes with $\Pi_0$. In what follows we will show that $\Pi_0 H_{SB}\Pi_0=0$ and $\Pi_0 \hat{H}_s(t) \Pi_0=\Pi_0 \bar{H}_s(t) \Pi_0=\Pi_0 \bar{H}_s(t)$. Combing these confirms that the effective Hamiltonian becomes $\Pi_0(\bar{H}_s(t) +H_B) $ which is the desired computation decoupled from the environment. 

The rest of this section is devoted to prove that $\Pi_0 H_{SB}\Pi_0=0$ and $\Pi_0 \hat{H}_s(t) \Pi_0=\Pi_0 \bar{H}_s(t)$. These two properties are consequences of the specific form of the penalty Hamiltonian, i.e., the fact that the penalty Hamiltonian does not have positive off-diagonal elements and also includes a complete set of gauge group generators.

\begin{lemma} \label{lemma1gaugeHamiltonian}
Let $H_p=-\sum_{g_i \in G'} g_i$, where $G'$ is a set of X-type and Z-type gauge operators that can generate the gauge group $G$. Denote the projector to  ground subspace of $H_p$ by $\Pi_0$. Then
\begin{enumerate}
  \item Any ground state of $H_p$ is stabilized by the stabilizers of the subsystem code, defined as  $S=G \cap Centralizer(G)$ \cite{burton2018spectra,Bravyipriv}.

  \item $\Pi_0 g_i  \Pi_0$ is proportional to $\Pi_0$, i.e. $\Pi_0 g_i  \Pi_0=\a_i \Pi_0$.
\end{enumerate}
\end{lemma}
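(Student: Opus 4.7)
The plan uses two structural properties of $H_p$: its stoquasticity in the computational basis (every generator in $G'$ is either X-type, a non-negative permutation, or Z-type, diagonal with $\pm 1$ entries, so $-H_p$ has non-negative off-diagonals) and the canonical tensor decomposition of the codespace as $\mathcal{H}_{\mathrm{code}}\cong \mathcal{H}_L\otimes\mathcal{H}_G$, on which bare logicals act trivially on $\mathcal{H}_G$ and gauge operators act trivially on $\mathcal{H}_L$. Perron--Frobenius will do most of the work.

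For Part 1, I would consider the graph whose vertices are computational basis states and whose edges connect states related by one of the X-type generators in $G'$. Perron--Frobenius applied blockwise yields a unique non-negative ground state $|\phi_C\rangle$ of $H_p$ supported on each connected component $C$. Fix a stabilizer $s\in S$ and, using that every generator in $G'$ is pure X or pure Z, decompose $s=\pm X_A Z_B$. The X-part $X_A$ is a product of X-type generators, hence reachable along edges of the graph, so it preserves each connected component. Because $s\in C(G)$, $Z_B$ must commute with every X-type generator $X_\alpha$, which is equivalent to $|B\cap\alpha|$ being even for each $\alpha$; the parity $|B\cdot z|$ is therefore constant along edges and hence constant on each $C$ with value $\epsilon(C)\in\{\pm 1\}$. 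Thus the restriction $s|_{\mathrm{span}(C)}$ is an overall sign times a non-negative permutation, and since its unique positive eigenvector $|\phi_C\rangle$ cannot be negated, the sign must be $+1$, placing $\Pi_0\subseteq\mathcal{H}_{\mathrm{code}}$.

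For Part 2, with $\Pi_0\subseteq\mathcal{H}_{\mathrm{code}}$ in hand, $H_p$ restricts to the codespace as $I_L\otimes\tilde H_p$ with $\tilde H_p=-\sum_i\tilde g_i$ a stoquastic Hamiltonian on the gauge subsystem. A second Perron--Frobenius application produces a unique non-negative gauge ground state $|\eta\rangle$, so $\Pi_0=I_L\otimes|\eta\rangle\langle\eta|$, giving $\Pi_0 g_i\Pi_0 = I_L\otimes\langle\eta|\tilde g_i|\eta\rangle\,|\eta\rangle\langle\eta|=\alpha_i\Pi_0$. Equivalently, $\Pi_0 g_i\Pi_0$ commutes with every bare logical and is therefore a scalar multiple of $\Pi_0$ by Schur's lemma. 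The main obstacle is the sign-tracking in Part 1: one must carefully check that the commutation of $s$ with every X-type generator forces the uniform $+1$ sign on each Perron--Frobenius vector, and also verify that the stoquastic graph of $\tilde H_p$ on $\mathcal{H}_G$ is irreducible so that the second Perron--Frobenius step yields a unique $|\eta\rangle$ --- both hold for the codes considered here.
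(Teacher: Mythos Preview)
Your overall strategy --- stoquasticity plus Perron--Frobenius --- is exactly the paper's. Your decomposition into connected components of the X-type-generator graph coincides with the paper's $(\mathbf{z},\mathbf{s})$-sector decomposition (joint eigenspaces of the $Z$-type bare logicals and $Z$-type stabilizers), and your Part~2 is essentially the paper's argument: the paper also shows the ground state is unique once the logical label $\mathbf{z}$ is fixed, and its Remarks section rephrases this via the same $I_L\otimes\tilde H_p$ picture you use.

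There is, however, a genuine gap in Part~1. Your argument correctly shows that on each component $C$ the stabilizer $s=\pm X_A Z_B$ acts as an overall sign times a non-negative permutation, and hence that the eigenvalue of $s$ on the Perron--Frobenius vector $|\phi_C\rangle$ equals that overall sign $\pm\,\epsilon(C)$. But nothing you wrote forces this sign to be $+1$. Commutation of $Z_B$ with every X-type generator only makes $\epsilon(C)$ \emph{constant} on $C$; it does not pin it to $+1$. For a pure $Z$-type stabilizer $s=Z_B$ there certainly exist components with $\epsilon(C)=-1$ (these are precisely the sectors with the corresponding $s_i=-1$), and on such a component $|\phi_C\rangle$ is a perfectly consistent $-1$-eigenvector of $s$: your phrase ``$|\phi_C\rangle$ cannot be negated'' does not exclude this. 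What you actually need is that the components of \emph{globally minimal energy} all have $\epsilon(C)=+1$ for every $Z$-type stabilizer, and that is not a consequence of the single computational-basis Perron--Frobenius run. The paper closes this gap by symmetry: it reruns the entire argument with the roles of $X$ and $Z$ interchanged (equivalently, working in the $X$-eigenbasis, where the $Z$-type generators become the off-diagonal permutations), which yields the $+1$ eigenvalue for $Z$-type stabilizers directly. Once you add that $X\leftrightarrow Z$ step, your proof is complete and equivalent to the paper's.
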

\begin{proof}
See the Supplementary Material.
\end{proof}

The first part of the Lemma guarantees that the ground subspace of the penalty Hamiltonian is in the codespace and hence can detect any local error $\sigma_\a$, i.e. $\Pi_0 \sigma_\a \Pi_0=0$. By assumption, the system part of the interaction Hamiltonian is one-local, i.e. $H_{SB}=\sum \sigma_\a \otimes B_\a$, and therefore 
\ba
\Pi_0 H_{SB} \Pi_0=0.
\ea 

To simplify $\Pi_0 \hat{H}_s(t) \Pi_0$, we note that  by definition, bare-logical operators leave the ground subspace of the penalty Hamiltonian invariant, i.e., we have $[\bar{X}_i,\Pi_0]=[\bar{Z}_i,\Pi_0]=0$, and therefore 
\ba
\Pi_0 \hat{H}_s(t) \Pi_0
= \Pi_0[\sum_i a_i \bar{X}_i &+&  \sum_i b_i \bar{Z}_i  \nonumber \\
+ \sum_{ij} c'_{ij} \bar{X}_i \bar{X}_j \Pi_0 g^x_{ij} \Pi_0 &+&  \sum_{ij} d'_{ij} \bar{Z}_i \bar{Z}_j \Pi_0 g^z_{ij} \Pi_0].
\ea

The second part of Lemma~\ref{lemma1gaugeHamiltonian} states that the effect of gauge operators in ground subspace of the penalty Hamiltonian is just a trivial energy shift:  $\Pi_0 g_{ij}\Pi_0=\alpha_{ij} \Pi_0$, where $\alpha_{ij}$ is a scalar. Therefore, we have 
\ba
\Pi_0 \hat{H}_s(t) \Pi_0
&=& \Pi_0[\sum_i a_i \bar{X}_i +  \sum_i b_i \bar{Z}_i  \nonumber \\
&+& \sum_{ij} \alpha^x_{ij}  c'_{ij} \bar{X}_i \bar{X}_j  +  \sum_{ij} \alpha^z_{ij}  d'_{ij} \bar{Z}_i \bar{Z}_j ].
\ea
Choosing $c'_{ij} = \frac{c_{ij}}{\alpha^x_{ij} }$ and $d'_{ij} = \frac{d_{ij}}{\alpha^z_{ij} }$ we get $\Pi_0 \hat{H}_s(t) \Pi_0=\Pi_0 \bar{H}_s(t)\Pi_0$, i.e., the effective Hamiltonian is identical to the desired Hamiltonian of Equation \ref{barHs}.

It is crucial that the rescaling ratios $\a_{ij}$ are independent of the computation; they only depend on the penalty Hamiltonian. 
For any penalty Hamiltonian these rescaling ratios can be measured (and stored) as part of a one-time calibration process. Also these coefficient are expected to be non-vanishing (see the SM), for example for the symmetric compass model, a quantum model closely related to the Bacon-Shor code \cite{li20192d}, this ratio is around 0.5~\cite{brzezicki2013symmetry}.

Note that the penalty Hamiltonian can be gapless, i.e., energy gap protection against errors decreases with increasing the system size. Therefore to maintain the same level of protection, we have to scale up the interactions of the penalty Hamiltonian. For the transverse field Ising spin Chain, the gap decreases linearly with the number of qubits~\cite{Marvian-Lidar:16} and for the full compass model it exhibits a power-law behavior~\cite{brzezicki2013symmetry}. Nevertheless, in general, the minimum gap of the computational Hamiltonian can decrease exponentially. (For example, for an adiabatic quantum optimizer with a final Hamiltonian that encodes an NP-hard problem~\cite{farhi_quantum_2000}.) Therefore the polynomially decreasing gap of the penalty Hamiltonian can provide significant protection against error for a large range of system sizes.
 
 \textit{Discussions.}-- 
Adiabatic quantum computation offers some intrinsic robustness against noise due to the energy gap in the system~\cite{childs_robustness_2001,aharonov_adiabatic_2007,Lloyd:2008zr}. 
The Hamiltonians constructed to solve NP-hard problems, such as the quantum adiabatic optimization algorithms~\cite{farhi_quantum_2000}, have exponentially decreasing gaps and therefore additional methods of error suppression are desirable. 
In this work, we have shown how to embed and protect universal Hamiltonians against local errors,  using  two-body interactions. For this purpose, we introduced a new error detecting code whose logical operators capable of universal AQC have a minimum weight. 
Since stable two-body interactions are experimentally more accessible than higher-local interactions, we expect an immediate application of our scheme in near-term quantum devices.
Furthermore, our approach shows how to use simpler two-body interactions to effectively generate protected logical interactions with higher locality, and therefore we expect our approach to be useful in developing noise-resilient perturbative gadgets. 


\textit{Note added:}
A result similar to Theorem 1, but limited to Bravyi's generalization of Bacon-Shor code, has been independently derived by P. Lisonek, A. Roy, and S. Trandafir using a different approach.

\textit{Acknowledgments.}-- We thank Daniel Lidar and Andrew Kerman for their comments on earlier versions of this manuscript. We thank Sergey Bravyi and Theodore J. Yoder for useful discussions. 
The research is based upon work (partially) supported by the Office of the Director of National Intelligence (ODNI), Intelligence Advanced Re- search Projects Activity (IARPA), via the U.S. Army Research Office contract W911NF-17-C-0050. The views and conclusions contained herein are those of the authors and should not be interpreted as necessarily representing the official policies or endorsements, either expressed or implied, of the ODNI, IARPA, or the U.S. Government. The U.S. Government is authorized to reproduce and distribute reprints for Governmental purposes notwithstanding any copyright annotation thereon.

\bibliographystyle{IEEEtran}
\bibliography{refs}

\onecolumngrid
\appendix
\section{Error bound}
We prove that by increasing the energy penalty $E_p$ the evolution generated by 
$\hat{H}_s(t) + E_p H_p +H_B + H_{SB}$ becomes arbitrary close to the decoupled evolution generated by $\bar{H}_s(t) +H_B+E_p H_p $. To do so, we invoke the following theorem of Ref.~\cite{Marvian-Lidar:16}:

\begin{theorem} (Theorem 2 in Ref.~\cite{Marvian-Lidar:16})
Let $H_p=\sum_i \lambda_i \Pi_i$ be the eigen-decomposition of $H_p$, and the projector to the initial state be $P=\sum_{i \in \mathcal{I}}\Pi_i$. Denote the unitary evolutions generated by $H_V= H_0+E_{\mrp} H_{\mrp} + V$ and $H_W = H_0+E_{\mrp} H_{\mrp} + W$, where $W=\sum_{i\in \mathcal{I}} \Pi_i V \Pi_i$, respectively by $U_V$ and $U_W$. Assume $[H_0,H_p]=[H_0,P]=0$ and denote the final time by $T$. Then:
\ba
\norm{U_V(T) P- U_W(T) P} \leq \\
\norm{K(T)} + T(\norm{V}+\norm{W})&&\sup_{t}{\norm{K(t)}}+T\sup_{t}\norm{[K(t),H_0(t)]} \ ,
\label{eq:UV-UW-again}
\ea
where
\ba
\norm{K(t)} \leq  \frac{2}{E_{\mrp}} \sum_{a\neq a'} \frac{\norm{V-W}}{|\lambda_{a}-\lambda_{a'}|} . 
\ea
\end{theorem}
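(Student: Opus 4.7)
The plan is a large-parameter (``high-frequency'') suppression argument based on integration by parts against $E_{\mrp}H_{\mrp}$. The strategy has three ingredients: (i) express $(U_V(T)-U_W(T))P$ via Duhamel's formula as a time integral of $(V-W)P$ conjugated by the two evolutions; (ii) write the off-block-diagonal operator $(V-W)P$ as a commutator $i[E_{\mrp}H_{\mrp},K]P$ for an operator $K$ of size $O(1/E_{\mrp})$ whose explicit form reproduces the stated norm bound; (iii) use that commutator structure to integrate by parts, producing a boundary term of size $\norm{K(T)}$ and controlled residual integrals from the remaining commutators with $V$, $W$, and $H_0$.

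Concretely, I would take
\[
K(t) \;=\; \frac{i}{E_{\mrp}}\sum_{a\neq a'}\frac{\Pi_a\bigl(V(t)-W(t)\bigr)\Pi_{a'}}{\lambda_a-\lambda_{a'}},
\]
so that direct use of $H_{\mrp}=\sum_a\lambda_a\Pi_a$ yields $[E_{\mrp}H_{\mrp},K(t)] = i\sum_{a\neq a'}\Pi_a(V-W)\Pi_{a'}$. On the subspace $P$ the omitted diagonal blocks vanish: for $i\in\mathcal{I}$ one has $\Pi_i W\Pi_i=\Pi_i V\Pi_i$ by definition of $W$, so $[E_{\mrp}H_{\mrp},K(t)]P = i(V(t)-W(t))P$. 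The stated bound on $\norm{K(t)}$ follows from the triangle inequality, the factor $2$ accounting for pairing $(a,a')$ with $(a',a)$.

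Next, Duhamel's formula gives
\[
(U_V(T)-U_W(T))P \;=\; -i\int_0^T U_V(T,t)\,(V(t)-W(t))\,P\,U_W(t)\,dt,
\]
where I use $[H_0,P]=[H_{\mrp},P]=[W,P]=0$ to move $P$ freely past $U_W(t)$. Substituting $(V-W)P = -i[E_{\mrp}H_{\mrp},K]P$ and decomposing the commutator symmetrically,
\[
[E_{\mrp}H_{\mrp},K] \;=\; \tfrac{1}{2}[H_V+H_W,K] - \tfrac{1}{2}[V+W,K] - [H_0,K],
\]
the $[H_V+H_W,K]$ piece combines with the conjugating unitaries to produce a total time derivative $\tfrac{d}{dt}\bigl[U_V(T,t)K(t)U_W(t)\bigr]$, modulo a $\dot K$ correction that is controlled by $\sup_t\norm{K(t)}$. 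Integration then delivers the boundary contribution, bounded in operator norm by $\norm{K(T)}$, together with integrated remainders of size at most $T(\norm{V}+\norm{W})\sup_t\norm{K(t)}$ coming from $[V+W,K]$ and $T\sup_t\norm{[K(t),H_0(t)]}$ from $[H_0,K]$. Summing these using $\norm{U_V}=\norm{U_W}=1$ yields exactly the stated inequality.

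The main obstacle is the bookkeeping in step (iii): one must make the symmetric assignment of the commutator between the $H_V$- and $H_W$-sides, check that the $\dot K$ term is absorbed into the $\sup_t\norm{K(t)}$ bound rather than generating a new contribution, and confirm that no double-counting inflates the constants. A secondary subtlety is that $K$ as defined is a sum over all pairs $a\neq a'$ of eigenspaces of $H_{\mrp}$, including indices outside $\mathcal{I}$; verifying that the restriction to $P$ kills precisely the ``bad'' diagonal blocks while preserving the commutator identity is essential, and it is exactly here that the hypothesis $[H_0,P]=[H_{\mrp},P]=0$ is used.
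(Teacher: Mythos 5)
First, note that this paper does not actually prove the statement: it is imported verbatim as Theorem~2 of Ref.~\cite{Marvian-Lidar:16}, and the appendix only instantiates it with $H_0=\sum_a \Pi_a \hat{H}_s(t)\Pi_a+H_B$, $V=\sum_{a\neq a'}\Pi_a\hat H_s(t)\Pi_{a'}+H_{SB}$, $W=\Pi_0 V\Pi_0$. So your attempt must be measured against the argument behind that theorem. Your overall strategy --- Duhamel's formula plus integration by parts against the large term $E_{\mrp}H_{\mrp}$, with an off-block-diagonal $K$ of size $O(1/E_{\mrp})$ --- is the correct family of ideas, and your preparatory steps are sound: the identity $[E_{\mrp}H_{\mrp},K(t)]P=i(V-W)P$ does hold because $\Pi_i(V-W)\Pi_i=0$ for $i\in\mathcal{I}$ (this follows from the definition of $W$, incidentally, not from $[H_0,P]=[H_{\mrp},P]=0$; those hypotheses are what let you move $P$ through $U_W$), and your bound on $\norm{K}$ is consistent with the stated one.

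The genuine gap is in step (iii), and it traces back to your choice of $K$. With your static $K$ (the inverse of $\mathrm{ad}_{E_{\mrp}H_{\mrp}}$ applied to the off-diagonal part of $V-W$), the integration by parts produces boundary contributions at \emph{both} endpoints, $K(T)U_W(T)P$ and $U_V(T,0)K(0)P$, plus a residual $\int_0^T\norm{\dot K(t)}\,dt$. Neither extra term appears in the stated inequality, and your claim that the $\dot K$ correction is ``controlled by $\sup_t\norm{K(t)}$'' is false: $\dot K(t)=\frac{i}{E_{\mrp}}\sum_{a\neq a'}\Pi_a(\dot V-\dot W)\Pi_{a'}/(\lambda_a-\lambda_{a'})$ involves $\dot V$, which in the present application (where $V$ contains the time-dependent $\hat H_s(t)$) is an independent quantity not bounded by any function of $\norm{V-W}$. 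The argument that reproduces the theorem exactly instead defines $K$ dynamically, in the toggling frame of the penalty,
\[
K(t)\;=\;\int_0^t e^{iE_{\mrp}H_{\mrp}s}\,\bigl(V(s)-W(s)\bigr)\,e^{-iE_{\mrp}H_{\mrp}s}\,ds ,
\]
so that $K(0)=0$ kills the second boundary term, $\dot K$ \emph{is} the Duhamel integrand (hence no residual $\dot K$ term arises at all, leaving only the $[H_0,K]$ and $(\tilde V K - KW)$ residuals, bounded by $T\sup_t\norm{[K,H_0]}$ and $T(\norm{V}+\norm{W})\sup_t\norm{K}$), and explicit integration of the phases $e^{iE_{\mrp}(\lambda_a-\lambda_{a'})s}$ yields $\norm{K(t)}\leq\frac{2}{E_{\mrp}}\sum_{a\neq a'}\norm{V-W}/|\lambda_a-\lambda_{a'}|$ --- the factor $2$ coming from $|e^{i\theta}-1|\leq 2$, not from pairing $(a,a')$ with $(a',a)$ as you suggest. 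Your version can be salvaged when $V$ is time-independent (then $\dot K=0$), but even there it proves only the weaker bound with $\norm{K(T)}+\norm{K(0)}$ in place of $\norm{K(T)}$, so as written it does not establish the stated inequality.
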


%
%
%

Our desired bound can be derived by plugging in $H_0=\sum_a \Pi_a \hat{H}_s(t) \Pi_a+H_B$, $P=\Pi_0$, $V=\sum_{a\neq a'} \Pi_a \hat{H}_s(t) \Pi_{a'}+H_{SB}$, and  $W=\Pi_0 V \Pi_0=\Pi_0 H_{SB} \Pi_0$.

Since $||V||$ and $||W||$ are independent of  $E_p$, in the large penalty limit the evolution generated by $H_0+E_{\mrp} H_{\mrp} + V=\hat{H}_s(t)+E_{\mrp} H_{\mrp}+H_B+H_{SB}$ gets close to $H_0+E_{\mrp} H_{\mrp} + W=\sum_a \Pi_a \hat{H}_s(t) \Pi_a+E_{\mrp} H_{\mrp} +H_B+ \Pi_0 H_{SB}\Pi_0$.
As discussed in the main text, using the error detecting property of the code, we have $\Pi_0 H_{SB} \Pi_0=0$. 

Since the remaining terms all commute with the projector to the initial state, we have $\mathcal{T}exp\big(\int^T_0 (\sum_a \Pi_a \hat{H}_s(t) \Pi_a + E_p H_p +H_B   ) dt \big) \Pi_0= Texp\big(\int^T_0 (\Pi_0 \hat{H}_s(t) \Pi_0 + E_p H_p +H_B ) dt \big) \Pi_0 $, and hence the effective Hamiltonian is equal to $\Pi_0 \hat{H}_s(t) \Pi_0+E_{\mrp} H_{\mrp} +H_B$. As discussed in the main text, we have $\Pi_0 \hat{H}_s(t) \Pi_0=\Pi_0 \bar{H}_s(t) \Pi_0$ which concludes the proof.

%
%

\section{Spectral properties of the penalty Hamiltonian}
\begin{lemma} \label{lemma1gaugeHamiltonian_appendix}
Let $H_p=-\sum_{g_i \in G'} g_i$, where $G'$ is a set of X-type and Z-type gauge operators that can generate the gauge group $G$. Denote the projector to  ground subspace of $H_p$ by $\Pi_0$. Then
\begin{enumerate}
  \item Any ground state of $H_p$ is stabilized by the stabilizers of the subsystem code, defined as  $S=G \cap Centralizer(G)$.

  \item The ground state of $H_p$ is exactly $2^k$ degenerate, where $k$ is the number of logical qubits of the code. 
  
  \item $\Pi_0 g_i  \Pi_0$ is proportional to $\Pi_0$, i.e. $\Pi_0 g_i  \Pi_0=c_i \Pi_0$.
\end{enumerate}
\end{lemma}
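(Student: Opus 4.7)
The plan is to combine a Perron--Frobenius (PF) argument exploiting the stoquastic structure of $H_p$ with the logical--gauge decomposition of the codespace.

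For part 1 I would first observe that, in the computational basis, every $-g_Z$ term is diagonal with entries $\pm 1$ while every $-g_X$ term is off-diagonal with entries equal to $-1$. After shifting by a sufficiently large multiple of the identity, $-H_p$ becomes entrywise non-negative, and its off-diagonal graph partitions the basis into orbits under the X-type gauge generators. PF then gives, inside each orbit, a unique ground vector with strictly positive amplitudes. Under the (effectively CSS) hypothesis, any X-type stabilizer is a product of X-type gauge generators and therefore acts as a permutation of each orbit; commuting with $H_p$, it preserves the positive PF eigenvector, forcing its eigenvalue to be $+1$. The dual argument, run in the $X$-eigenbasis where the roles of $X$ and $Z$ are exchanged, pins the eigenvalue of every Z-type stabilizer to $+1$ on any orbit whose ground vector attains the overall minimum. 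What remains is to show that this overall minimum is attained only on orbits sitting in the joint $+1$ eigenspace of every stabilizer; I would do this by comparing, for each syndrome sector $V_\sigma = E V_+$, the Hamiltonian $H_p|_{V_\sigma}$ with the Pauli-twisted Hamiltonian $E^{\dagger} H_p E = -\sum_i \epsilon_i g_i$ acting on the codespace, and showing via a monotonicity-for-stoquastic-matrices argument (in the spirit of Wielandt's bound) that the untwisted $H_p$ has the strictly smaller minimum eigenvalue.

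Parts 2 and 3 follow once the ground space is known to lie in the codespace. Using the subsystem identification $V_+ \cong \mathcal{H}_L \otimes \mathcal{H}_G$, each gauge operator $g_i$ acts on $V_+$ as $I_L \otimes P_i^{g}$ for some Pauli $P_i^{g}$ on the gauge subsystem, so that $H_p|_{V_+} = I_L \otimes \tilde{H}$ with $\tilde{H} = -\sum_i P_i^{g}$. The $P_i^{g}$ inherit the X-type/Z-type split, so $\tilde{H}$ is itself stoquastic and irreducible on $\mathcal{H}_G$, and PF yields a unique ground state $|\chi\rangle \in \mathcal{H}_G$. The ground subspace of $H_p$ is therefore $\mathcal{H}_L \otimes |\chi\rangle$, which is $2^k$-dimensional, proving part 2; and $\Pi_0 g_i \Pi_0 = \Pi_0\,(I_L \otimes P_i^{g})\,\Pi_0 = \langle \chi | P_i^{g} | \chi\rangle\,\Pi_0 = c_i \Pi_0$, proving part 3.

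The step I expect to be the main obstacle is the sector comparison in part 1. A direct Wielandt-type entrywise bound fails on the diagonal contributions from $-g_Z$, because a sign twist on a $g_Z$ can flip the sign of a diagonal entry rather than only reducing its magnitude. The remedy I would pursue is to first restrict to a fixed X-orbit (so the diagonal shift from the Z-terms is uniform on that block) and apply the entrywise bound to the X-block, then iterate the argument stabilizer-by-stabilizer, using the fact that a nontrivial syndrome for a Z-type stabilizer forces a sign flip on at least one $g_Z$ whose contribution is extremal on some orbit. Once that technical step is in place, parts 2 and 3 are mechanical consequences of the subsystem decomposition.
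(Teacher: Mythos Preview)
Your toolkit --- stoquasticity, Perron--Frobenius on X-orbits, the $X\leftrightarrow Z$ dual, and the logical--gauge tensor decomposition --- matches the paper's exactly. The divergence is that you have manufactured an obstacle (the ``sector comparison'' via a Wielandt-type bound) that the paper never faces, and the sketch you give for overcoming it is the weakest part of your proposal.

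The point you are missing is that the PF argument already shows that the ground vector of \emph{every} $(\mathbf{z},\mathbf{s})$ block is a $+1$ eigenvector of every X-type stabilizer: each X-type stabilizer commutes with $H_p$, preserves the block, and permutes the computational basis states inside it, so it must fix the unique positive PF vector. Since any global ground state of $H_p$ is a linear combination of block ground states (namely those blocks attaining the minimum), Part~1 for X-type stabilizers follows immediately --- with no need to know \emph{which} blocks attain the minimum. The $X\leftrightarrow Z$ dual then gives Part~1 for Z-type stabilizers in the same one-line way. This is exactly how the paper proceeds.

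Once Part~1 is established, the sector restriction you were trying to prove by Wielandt comes for free: Part~1 says every ground state has $\mathbf{s}=\vec{+1}$, so the minimum is attained only in the trivial-syndrome sector. PF uniqueness in that single sector (together with the $2^k$-fold repetition over $\mathbf{z}$) gives Part~2, and your derivation of Part~3 from the subsystem decomposition is then correct and essentially identical to the paper's. In short: drop the entrywise comparison entirely; reorder your logic so that Part~1 is proved first and then used to pin down the sector, rather than the other way around.
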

The observation that the ground space of the penalty Hamiltonian is  in the codespace has been pointed out in Refs.~\cite{burton2018spectra,Bravyipriv}. The second and third statements are appearing for the first time but are related to Lemma 5 and Proposition 7 of Ref~\cite{burton2018spectra}.

\begin{proof}
First note that all the bare-logical operators and stabilizers commute with all $g_i$ and hence with $H_p$. We denote the eigenvalues of the Z-type bare-logical operator $\bar{Z}_i$ by $z_i \in \{-1,1\}$, and  denote the eigenvalues of  the Z-type stabilizers $S^z_i$ by $s_i \in \{-1,1\}$. We collectively  represent the values of $z_i$ by $\bold{z}$, and the values of $s_i$ by $\bold{s}$.
Therefore we have
\ba \label{eq:Hpdirectsum}
H_p=\bigoplus_{\bold{z}} \bigoplus_{\bold{s}}   H_{\bold{z},\bold{s}} \, .
\ea

Note that $H_p$ is independent of the Z-type bare-logical operator. Therefore if we use the sectors indexed by $\bold{z}$ to represent $H_p$, it becomes a block-diagonal matrix with identical blocks of 
\ba \label{Eq:H}
H=\bigoplus_{\bold{s}}   H_{\bold{z},\bold{s}}
\ea 
(hence  omitting the $\bold{z}$ subscript from $H$). To prove the second and third statements of the lemma,  we  prove that the ground state of $H$ is non-degenerate.

In every $(\bold{z},\bold{s})$-subspace the restricted Hamiltonian $H_{\bold{z},\bold{s}}$ is an irreducible symmetric matrix with non-positive off-diagonal elements in the computational basis.
Non-positivity of the off-diagonal elements follows from the fact the only terms in $H_p$ that generate off-diagonal elements are X-type operators with negative coefficients. 
Since a complete set of X-type generators are  included in  $H_p$,  in each $(\bold{z},\bold{s})$-sector we can transverse from any computational state  to any other computational state using a sequence of  X-type gauge operators. Therefore $H_{\bold{z},\bold{s}}$ is  irreducible. 

By Perron-Frobenius theorem 1) $H_{\bold{z},\bold{s}}$ has a unique ground state in the corresponding sector and 2) the  amplitudes of the ground state in each sector are positive.
Since  $X$-type stabilizers  commute with $H_{\bold{z},\bold{s}}$, we conclude that the unique  ground state of $H_{\bold{z},\bold{s}}$ is also an eigenvector of all $X$-type stabilizers. It is the eigenvector corresponding to eigenvalue $+1$  (and not the eigenvalue $-1$), as  the  amplitudes of the ground state are  positive, and an $X$-type operator can only shuffle the amplitudes but cannot change their sign.

The first statement can be proved  as follows. Since $H_p$ can be written as a direct sum of $H_{\bold{z},\bold{s}}$ terms (Eq. \ref{eq:Hpdirectsum}), any ground state of $H_p$ can be written as a linear sum of ground states of $H_{\bold{z},\bold{s}}$ terms. As 
all the $X$-type stabilizers stabilize the unique ground state of each $H_{\bold{z},\bold{s}}$, they also stabilize any ground state of $H_p$. Replacing the role of $X$ and $Z$ operators, we can  similarly prove that any ground state of $H_p$ is stabilized by the  $Z$-type stabilizers of the code. This proves the first statement.

To prove the second statement, we note that since any ground state of $H_p$ is stabilized by the  $Z$-type stabilizers of the code, we conclude that the ground state of $H$ has to be in $\forall i: \, s_i=+1$ sector ($\bold{s}=\overrightarrow{+1}$). 
We already established that ground state of each $H_{\bold{z},\bold{s}}$, and in particular the sectors corresponding to $\bold{s}=\overrightarrow{+1}$, is unique. Therefore we conclude that the ground state of $H=\bigoplus_{\bold{s}}   H_{\bold{z},\bold{s}}$ is unique. 

The third statement in the lemma is a direct consequence of the fact that $\Pi_0$ can be written as $\bigoplus_{\bold{z}} \ketbra{\tilde{\psi}_0}$ (the second statement) and the fact that gauge operators commute with all the logical operators, hence can be written as $g_i=\bigoplus_{\bold{z}}    \tilde{g_i}$. From this we conclude that: 
\ba
\Pi_0 g_i \Pi_0=\bigoplus_{\bold{z}}    \ketbra{\tilde{\psi}_0}\tilde{g_i}\ketbra{\tilde{\psi}_0}=\braket{\tilde{\psi}_0|\tilde{g_i}|\tilde{\psi}_0}\bigoplus_{\bold{z}} \ketbra{\tilde{\psi}_0} = \braket{\tilde{\psi}_0|\tilde{g_i}|\tilde{\psi}_0} \Pi_0,
\ea
where we have used the crucial fact that both $\ket{\tilde{\psi}_0}$ and $\tilde{g_i}$ are independent of $\bold{z}$.
\end{proof}

\subsection{Remarks}

By definition, bare-logical operators of any subsystem code commute with all the elements of its gauge group. This commutation relation induces a tensor structure on the full Hilbert space $\mathcal{H}  \cong \mathcal{H}_{logical} \otimes \mathcal{H}_{rest}$ \cite{ZLL:04}.
Checking this property is easy for our introduced quantum code. In the left hand side of Fig.~\ref{Fig_code}  bare-logical operators are presented in the canonical form: $2k$  pairs of Pauli operators that anti-commute if they have the same index, and commute otherwise. Hence the logical operators generate the full algebra on the $2k$ logical qubits. Therefore any gauge operator,  e.g. operators in the right hand side of Fig.~\ref{Fig_code},  can be written as $g_{ij}=U (I^{\otimes 2k} \otimes \tilde{g}_{ij})U^\dagger$ for some $U$.

It follows that the penalty Hamiltonian or generally any operator that is only composed of gauge operators,
can be written as $H_p=U (I^{\otimes 2k} \otimes \tilde{H}_p) U^\dagger$ and 
the projector to the ground subspace of $H_p$ can be written as $\Pi_0=U (I^{\otimes 2k} \otimes \tilde{\Pi}_0) U^\dagger$. In general there is no constraint on the rank of $\tilde{\Pi}_0$. As was shown in Lemma~\ref{lemma1gaugeHamiltonian}, if $H_p$ includes of a complete set of generators of the gauge group then $\tilde{\Pi}_0$ has rank one. Hence,
\ba
\Pi_0 g_{ij}\Pi_0 &=& U(I \otimes \ket{\psi_0}\braket{\psi_0|\hat{g}_{ij}|\psi_0} \bra{\psi_0})U^\dagger\\
&=& \braket{\psi_0|\hat{g}_{ij}|\psi_0} \Pi_0,
\ea
which is the third statement of the Lemma~\ref{lemma1gaugeHamiltonian}.

%
 

The re-scaling ratio $\braket{\psi_0|\hat{g}_{ij}|\psi_0}$ is upper-bounded 
by $1$ as $\hat{g}_{ij}$ is a Pauli operator with eigenvalues $\pm1$. 
We do not expect these values to be vanishingly small because of the following observation: 

By Perron-Frobenius theorem, the ground state $\ket{\psi}$ has only positive amplitudes. Consider an $XX$ gauge operator. $XX\ket{\psi}$ also only has positive amplitudes. Therefore $\braket{\psi|XX|\psi}$ is a sum of positive numbers, and there is no cancelation by negative numbers. ($H_p$ is symmetric with respect to switching the role of $XX$ and $ZZ$ and therefore the same observation holds for   $\braket{\psi|ZZ|\psi}$.) Indeed for a full quantum compass model, the value of $\braket{XX}$ is around 0.5~\cite{brzezicki2013symmetry}.  (See Figure 5 of Ref.~\cite{brzezicki2013symmetry}, where for the symmetric compass model 
$\braket{XX}$ is around 0.5.)

Note that the exact value of the rescaling ratio depends on the architecture of $H_p$, for example, whether the architecture is 1D or 2D. But for a particular architecture capable of performing universal quantum computation one might be able to analytically evaluate the rescaling ratio by  designing the penalty Hamiltonian. Even if the analytical calculation is not feasible for a particular architecture, for any penalty Hamiltonian these rescaling ratios can be easily measured (and stored) as part of a one-time calibration process (by setting the computational Hamiltonian to zero and measuring the energies of the gauge operators).


\section{2D Layout}

The quantum code described in the main text is capable of encoding any 2D Hamiltonian. Depicting  $XX$ interactions by blue lines and $ZZ$ interactions by orange lines the penalty Hamiltonian and logical operators are presented in Figure \ref{fig:2D}.
%
\begin{figure}[!ht]
\centering
\begin{subfigure}{.5\textwidth}
  \centering
  \includegraphics[width=1\linewidth]{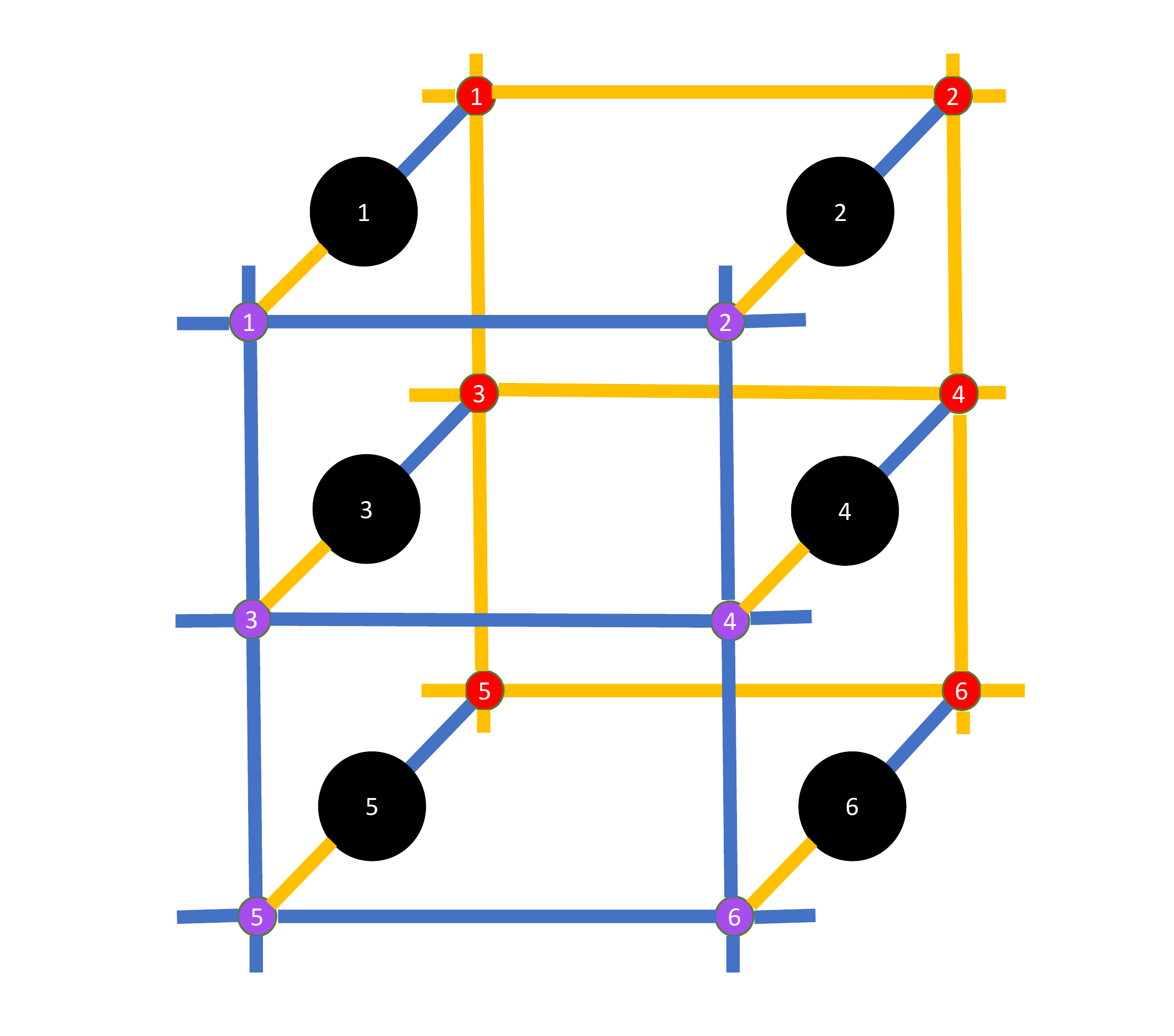}
  \caption{The penalty Hamiltonian}
  \label{fig:sub1}
\end{subfigure}%
\begin{subfigure}{.5\textwidth}
  \centering
  \includegraphics[width=1\linewidth]{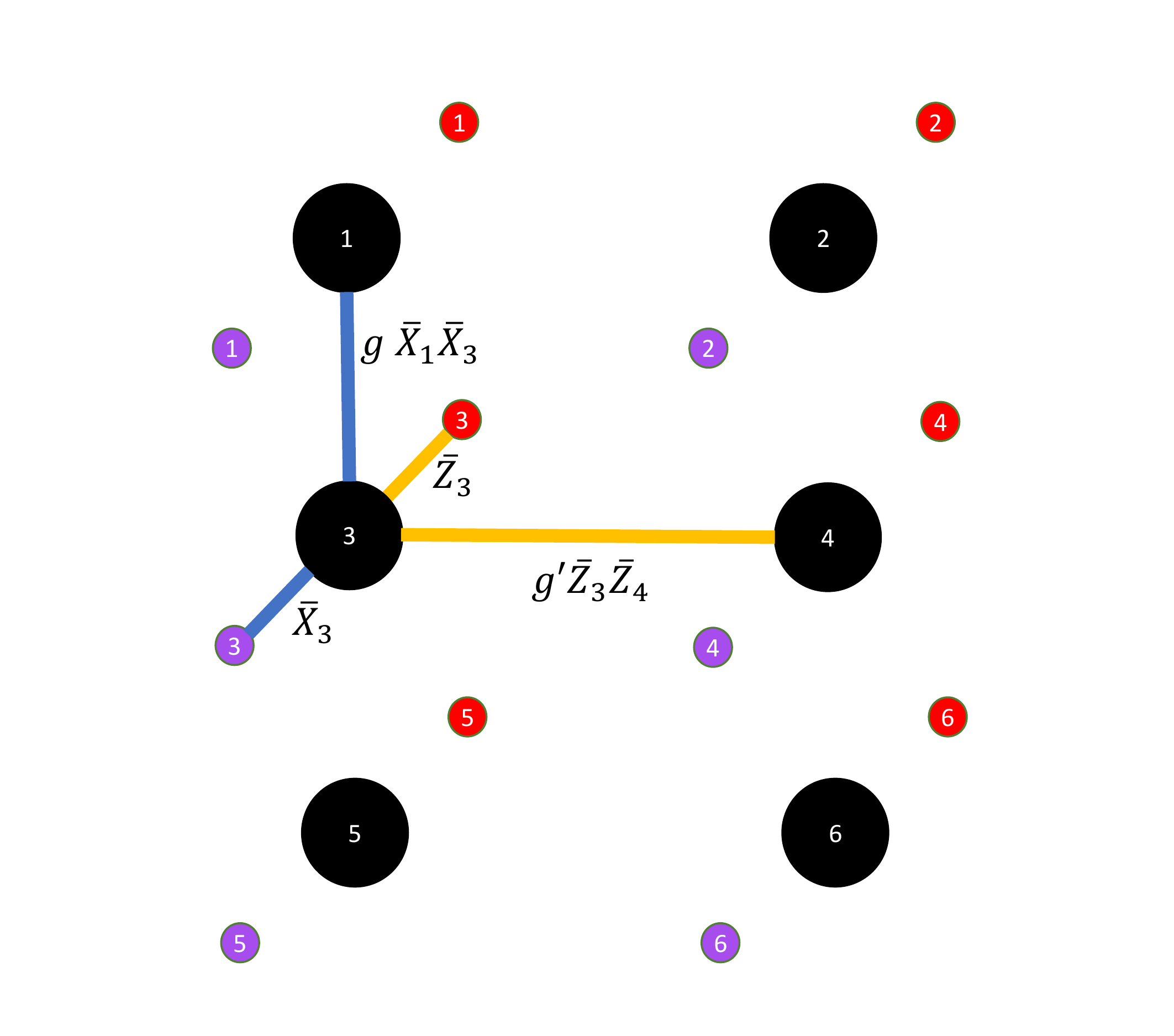}
  \caption{Logical operators}
  \label{fig:sub2}
\end{subfigure}
\caption{Configuration of the code to encode  Hamiltonians defined on a 2D lattice}
\label{fig:2D}
\end{figure}

\section{The proposed code in the form of generalized Bacon-Shor code of Ref.~\cite{PhysRevA.83.012320}} \label{app:Amatrix}
As introduced in Ref.~\cite{PhysRevA.83.012320}, any square binary matrix $A$ defines a  generalized Bacon-Shor code.
Each non-zero elements of $A$ represents a physical qubit. The gauge group $\mathcal{G}$ is generated by  $X_a X_b$ operator for each pair of qubits $a,b$ in the same row and $Z_a Z_b$ operators for each pair of qubits $a,b$ in the same column.

All the properties of this code can be related to properties of the $A$ matrix. The number of logical qubits encoded is equal to the rank of $A$ over ${\mathbb{F}}_2$ and the code distance is the minimum weight of the non-zero vectors in the row space and the non-zero vectors in the column space. So we have $[[n,k,d]]=[[|A|, \mathrm{rank}(A), \min\{d_{\mathrm{row}}, d_{\mathrm{col}}\}]]$, where $|A|$ denotes the Hamming weight of matrix $A$.

%

In this notation, our proposed code can be represented using the following $A$ matrix:
\ba
A=
\begin{bmatrix}
  1 &1 &0 & 0 \dots &0 &0 \\
  1 &0 &1 & 0\dots&0 &0\\
  1 &0 &0 & 1\dots&0 &0\\
 \vdots & \vdots\\
 1 &0 &0 & 0...&1 &0\\
   1 &0 &0 & 0...&0 &1\\
  0 &1 &1 & 1...&1 &1
\end{bmatrix}_{(2k+1) \times (2k+1)}.
\ea

To relate this matrix to the arrangement of the qubits in Figure~\ref{Fig_code}, we can associate the diagonal non-zero elements in the matrix with the black qubits in Figure~\ref{Fig_code}, the elements in the first column with the red qubits and the elements in the last row with the lavender qubits. 

It is straightforward to check that $[[n,k,d]]=[[|A|, \mathrm{rank}(A), \min\{d_{\mathrm{row}}, d_{\mathrm{col}}\}]]=[[6k,2k,2]]$.

\subsection{Gauge operators in the canonical form} \label{secappendixGauge}
The code uses $6k$ physical qubits and two stabilizers. Therefore, the dimension of the codespace is $2^{6k}/2^2= 2^{6k-2}$. The logical interactions $\bar{X}_i=X_{B_i} X_{L_i}$ and $\bar{Z}_i=Z_{B_i} Z_{R_i}$ (the left-hand side of Figure \ref{Fig_code}) define $2k$ qubits in the information subsystem. A canonical representation of the remaining $4k-2$ qubits in the gauge subsystem can be defined using the following pairs of operators:
\ba
\forall 1\leq i \leq 2k-1&:& \quad X_{L_i} X_{L_{i+1}}, \prod_{j=1}^{i} Z_{L_j} Z_{B_{j}}\\
\forall 1\leq i \leq 2k-1&:& \quad  Z_{R_i} Z_{R_{i+1}},X^{i}_{L_{2k}} \prod_{j=i+1}^{2k} X_{L_j} X_{B_j}X_{R_{j}},
\ea

These 
$(8k-4)$  operators plus the two stabilizers generate the gauge group. Another set of $8k-2$ 
generators, consists of  the two-body operators $\{X_{B_i} X_{R_i}, X_{L_i} X_{L_{i+1}}, Z_{B_i} Z_{L_i}, Z_{R_i} Z_{R_{i+1}}\}$,  depicted in the right-hand side of Figure \ref{Fig_code}.

\section{Example: Penalty Hamiltonian for the $[[6,2,2]]$ code}

\begin{figure}[!t]
\begin{center}
\includegraphics[height=0.25\textheight]{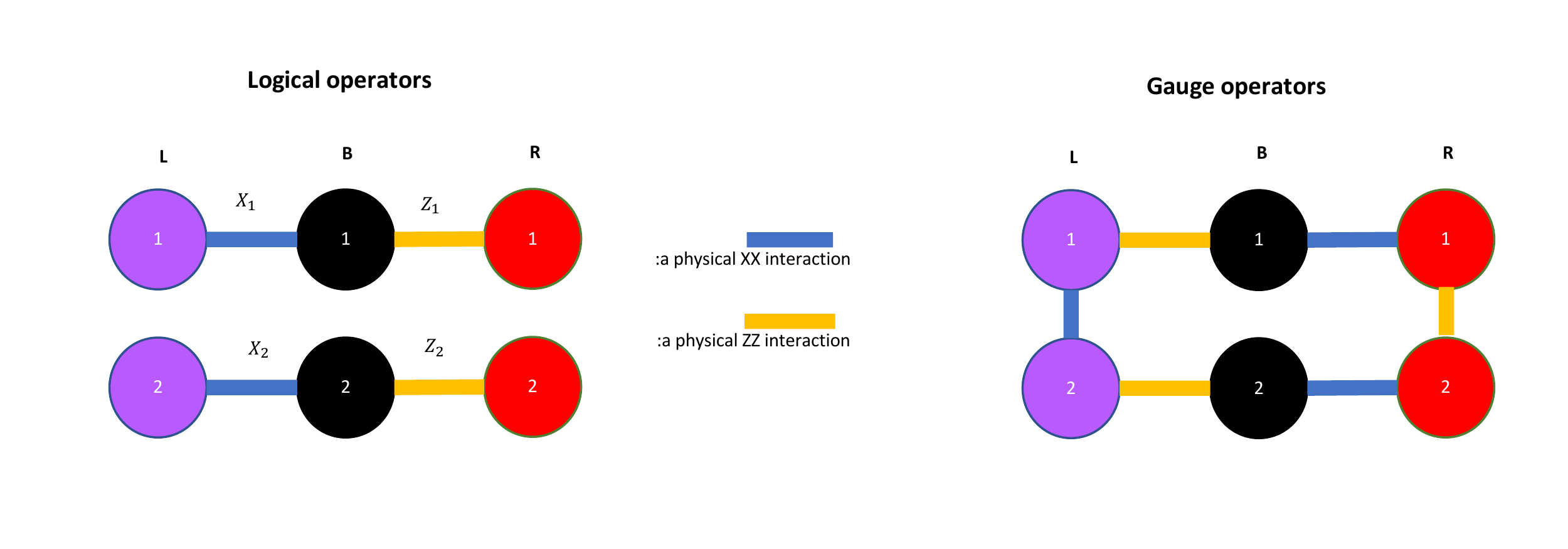}
\caption{\textbf{Description of the $[[6,2,2]]$ subsystem code} }
\label{Fig_code2}
\end{center}
\end{figure}

\ba
H_p= - (X_{L_1}X_{L_2} + X_{B_1}X_{R_1} + X_{B_2}X_{R_2} +Z_{L_1}Z_{B_1} +Z_{L_2}Z_{B_2} +Z_{R_1}Z_{R_2})
\ea

First, we define a new set of operators as follows:
\ba
(\bar{X}_1=)X_{L_1}X_{B_1}  \rightarrow \hat{X}_1\\
(\bar{Z}_1=) Z_{B_1}Z_{R_1} \rightarrow \hat{Z}_1\\
(\bar{X}_2=) X_{L_2}X_{B_2}  \rightarrow \hat{X}_2\\
(\bar{Z}_2=) Z_{B_2}Z_{R_2} \rightarrow \hat{Z}_2
\ea
which correspond to logical qubits, and 
\ba
X_{L_1}X_{L_2} \rightarrow \hat{X}_3\\
Z_{L_1}Z_{B_1} \rightarrow \hat{Z}_3\\
X_{B_2}X_{R_2} \rightarrow \hat{X}_4\\
Z_{R_1}Z_{R_2} \rightarrow \hat{Z}_4\\
\ea
which correspond to  gauge qubits, and
\ba
X_{L_1}X_{L_2} X_{B_1}X_{R_1} X_{B_2}X_{R_2}  \rightarrow \hat{X}_5\\
Z_{R_1} \rightarrow \hat{Z}_5\\
X_{L_2}\rightarrow \hat{X}_6\\
Z_{L_1}Z_{L_2} Z_{B_1}Z_{R_1} Z_{B_2}Z_{R_2} \rightarrow \hat{Z}_6
\ea
corresponding to the stabilizers and error generators. (See the previous section for the mapping for any $k$).

These $6$ pairs of operators with a hat are in canonical form and generate the full algebra for a $6$ qubit Hilbert space. This is basically the recipe for $U$ in $H_p=U (I^{\otimes 2k} \otimes \tilde{H}_p) U^\dagger$.

We can rewrite the penalty Hamiltonian using these operators as:
\ba
H_p=I \otimes I \otimes (- \hat{X}_3- \hat{X}_5  \hat{X}_3  \hat{X}_4- \hat{X}_4- \hat{Z}_3-\hat{Z}_6  \hat{Z}_3  \hat{Z}_4 - \hat{Z}_4)
\ea

One can check that indeed the ground state of $\tilde{H}_p$ is unique, and $\alpha=0.6667$ in this case.

\section{Limitations on the the algebra generated by two-local bare-logical operations in Bravyi's subsystem codes}
Here we prove that the bare-logical operations that can be performed using weight-two operators in Bravyi's generalization of Bacon-Shor subsystem codes Ref.~\cite{PhysRevA.83.012320} are limited.  In particular, we show that they require bare-logical operators with weight larger than two to be able to generate logical operations equivalent to a simple set of interactions such as $\{Z_1, Z_1Z_2, Z_1Z_3, X_1\}$ or $\{Z_1Z_2, Z_1Z_3, Z_1 Z_4,  X_1\}$. Therefore, using Bravyi's subsystem code and only using two-local interactions (with bare-logical encoding) one cannot go beyond the encoding of an Ising \textit{chain} in a transverse field as constructed in Ref.~\cite{Marvian-Lidar:16}.

The proof goes as follows:

\begin{enumerate}
\item First a couple of reminders from Bravyi's original paper~\cite{PhysRevA.83.012320}: 
\begin{enumerate}
\item  X-type logical operators and X-type stabilizers are a combination of columns of matrix A (Eq. 8 of Ref.~\cite{PhysRevA.83.012320})

\item Z-type logical operators and Z-type stabilizers are a combination of rows of matrix A, (Eq. 9 of Ref.~\cite{PhysRevA.83.012320})

\item The distance of the code is the minimum weight of columns and rows in the space of columns and rows (Theorem 2 in Ref.~\cite{PhysRevA.83.012320}). Therefore to have $d>1$,  each column and row of A have at least two non-zero elements.
\end{enumerate}

\item We require that the bare-logical operator $Z_1$ and also bare-logical operators $Z_1Z_2$, $Z_1 Z_3$ all can be implemented using two-body interactions (each up to a $Z$-type stabilizer freedom). As the weight of each rows of A is at least two, each of these three interactions has to be a distinct row of A. (Note that an interaction described by two rows of A requires at least 4-body interaction and so on.)

\item Similarly a bare-logical $X_1$ (up to a $X$-type stabilizer) has to be presentable with a single columns of A. But such an interaction has to anticommute with $Z_1, Z_1Z_2, Z_1Z_3$. As these three interactions are distinct rows of A, this means that a bare logical $X_1$ has to have at least three non-zero elements in the corresponding rows! Therefore $X_1$ cannot be two-local.
\end{enumerate}

Similar argument works for  $\{Z_1Z_2, Z_1Z_3, Z_1 Z_4,  X_1\}$. 

\subsection{No-go theorem: General CSS codes}
\begin{theorem}
Consider any nontrivial ($d>1$) stabilizer subsystem code of CSS type with a gauge group that can be generated using  $XX$ and $ZZ$ interactions. Then the weight of an X-type (Z-type)  single-qubit bare-logical operator is lower-bounded by the number of Z-type (X-type) bare-logical operators acting on its supporting logical qubits.

\end{theorem}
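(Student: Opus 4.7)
The plan is to reduce the theorem to a rank bound over $\mathbb F_2$ by extracting the support-structure consequences of the two-body CSS hypothesis.

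First, I would translate the hypothesis into graph theory. Writing the gauge group as $\mathcal G=\langle \mathcal G_{XX},\mathcal G_{ZZ}\rangle$, introduce graphs $G_X$ and $G_Z$ on the physical qubits whose edges are the $XX$ and $ZZ$ gauge generators respectively. An $X$-type Pauli $X_T$ commutes with a $ZZ$ generator $Z_aZ_b$ iff $|T\cap\{a,b\}|$ is even, which happens iff $T$ is a union of connected components of $G_Z$; symmetrically, a $Z$-type Pauli commutes with every $XX$ generator iff its support is a union of components of $G_X$. Hence the support of every $X$-type bare-logical is a union of $G_Z$-components and the support of every $Z$-type bare-logical is a union of $G_X$-components; the $d\geq 2$ hypothesis further forces every qubit to have degree at least one in both graphs, so all components have size at least two.

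Second, I would set up the central linear-algebra object. Let $S=\mathrm{supp}(\bar X_i)$ and $w=|S|$, and let $\mathcal L_Z$ denote the $\mathbb F_2$-vector space of $Z$-type bare-logical operators modulo $Z$-type stabilizers. Define the restriction map
\[
\rho:\mathcal L_Z\to\mathbb F_2^{S},\qquad \bar Z\mapsto \bar Z|_S.
\]
I interpret the phrase \emph{``number of $Z$-type bare-logical operators acting on the supporting qubits of $\bar X_i$''} as the $\mathbb F_2$-dimension $N:=\dim\,\mathrm{im}(\rho)$, i.e.\ the number of independent non-trivial restrictions of $Z$-type bare-logicals to $S$. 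Since the image lies inside $\mathbb F_2^S$, the rank bound gives $N\leq w$, which is precisely the content of the theorem.

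Third, I would verify that this reading recovers the explicit lower bound claimed in the Bravyi example preceding the theorem. If the weight-two bare-logicals $\bar Z_1,\bar Z_1\bar Z_2,\bar Z_1\bar Z_3$ sit on three disjoint $G_X$-components $r_1,r_2,r_3$, then the single-qubit $Z$-type bare-logicals are $\bar Z_1=r_1$, $\bar Z_2=r_1+r_2$, $\bar Z_3=r_1+r_3$. The commutation relations $\{\bar X_1,\bar Z_1\}=0$ and $[\bar X_1,\bar Z_2]=[\bar X_1,\bar Z_3]=0$ force each of $|r_j\cap S|$ to be odd, hence non-empty. Because $r_1\cap S$, $r_2\cap S$, $r_3\cap S$ are pairwise disjoint non-empty subsets of $S$, the three restricted vectors are $\mathbb F_2$-independent, so $N\geq 3$ and we recover $w\geq 3$, as advertised.

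The main obstacle is interpretive rather than computational: once one fixes a representative-independent meaning for \emph{``bare-logical operators acting on the supporting qubits''} as the rank of $\rho$, the bound becomes a one-line linear-algebra fact. The substantive content of the theorem therefore lives in the combinatorial lemma of Step 1, which uses the two-body CSS hypothesis to force the supports of bare-logicals into unions of graph components; this is what makes the restriction map natural and the rank bound meaningful. Sanity-checking against the example of Step 3 is what confirms that the definition of $N$ captures the intended anticommutation-based counting.
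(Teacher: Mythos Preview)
Your Step~1 is correct and is essentially the structural content the paper exploits: weight-two $Z$-type bare-logicals are exactly the size-$2$ connected components of $G_X$, and therefore any two distinct ones have disjoint supports. The paper proves this disjointness directly by a local anticommutation argument (if two weight-two bare-logicals of the same type shared one qubit, a two-body gauge generator of the opposite type on that qubit would anticommute with exactly one of them), but your component picture gives the same conclusion.

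The gap is in Step~2. You redefine the quantity being bounded as $N=\dim\operatorname{im}(\rho)$, the rank of the restriction-to-$S$ map on \emph{all} $Z$-type bare-logicals. Under that reading the inequality $N\le |S|$ is the trivial bound $\operatorname{rank}\le\text{ambient dimension}$, which holds for \emph{every} code and uses neither the CSS structure nor the two-body hypothesis. That cannot be what the theorem asserts, since the hypothesis would then be idle. In the paper the counted quantity is the number $m$ of \emph{weight-two} $Z$-type bare-logicals whose logical action touches qubit $i$ (i.e.\ that anticommute with $\bar X_i$); the point of the theorem is that this $m$ is at most $w$. Your Step~3 checks this only in one example, not in general.

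What is missing is precisely the general link $m\le w$. It follows immediately from your own Step~1: the $m$ bare-logicals are $m$ distinct size-$2$ components of $G_X$, hence pairwise disjoint; each anticommutes with $\bar X_i$, so each meets $S$ in exactly one physical qubit; these $m$ contact points are distinct, giving $m\le |S|=w$. This is exactly the paper's argument, phrased in your component language. So the fix is to drop the rank-of-$\rho$ reinterpretation and instead use your Step~1 to prove disjointness, then count.
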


\begin{proof}
First, note that for any code that can detect any single error, each physical qubit has to be acted upon with at least two gauge generators with different types. Therefore for each physical qubit, there are at least two gauge generators that their action  anti-commute on this qubit.

Let $\bar{L}_1$ and $\bar{L}_2$ be two distinct two-local bare-logical operators of the same type. We show that the support of $\bar{L}_1$ and $\bar{L}_2$ cannot overlap, i.e., they act on a distinct set of physical qubits.

To see this, note that, clearly, any two distinct two-local operators of the same type can at most overlap on one physical qubit (otherwise, they are not distinct!) Assume that they overlap on only one physical qubit. By definition, bare-logical operators have to commute with all the gauge operators. But this is not possible since, as discussed earlier, there always exists a weight-two gauge operator that anti-commutes with either $\bar{L}_1$ or $\bar{L}_2$ (there always exists a gauge operator whose action on the overlapping qubit is different from the type of $\bar{L}_1$ and $\bar{L}_2$.)  This leads to a contradiction and we conclude that the supports of the two bare-logical operators of the same type cannot overlap.

Suppose $\bar{L}_1$ to $\bar{L}_m$ are distinct two-local bare-logical operators of the same type whose actions on logical qubit $a$ commutes. Then implementing the single logical operator that anti-commuted with the action of all these operators on the logical qubit $a$ requires $m$-body interaction. This trivially follows from the fact that $\bar{L}_1$ and $\bar{L}_2$ are supported on a non-overlapping set of qubits and the support of any operator that anti-commutes with all of them has to overlap with their support.

\end{proof}

\end{document}